\newtheorem{proposition}{Proposition}
\newtheorem{remark}{Remark}
\newtheorem{observation}{Observation}
\begin{document}
\title{Characterizing the boundary of the set of absolutely separable states and their generation  via noisy environments}

\author{Saronath Halder$^1$, Shiladitya Mal$^{1,2}$, Aditi Sen(De)$^1$}

\affiliation{$^1$Harish-Chandra Research Institute, HBNI, Chhatnag Road, Jhunsi, Allahabad 211 019, India\\ $^2$Department of Physics and Center for Quantum Frontiers of Research and Technology (QFort), National Cheng Kung University, Tainan 701, Taiwan}

\begin{abstract}
We characterize the boundary of the convex compact set of absolutely separable states, referred as {\bf AS}, that cannot be transformed to entangled states by global unitary operators, in $2\otimes d$ Hilbert space.  However, we show that the absolutely separable states of rank-$(2d-1)$ are extreme points of such sets. We then discuss conditions to examine if a given full-rank absolutely separable state is an interior point or a boundary point of {\bf AS}. Moreover, we construct two-qubit absolutely separable states which are boundary points but not extreme points of {\bf AS} and prove the existence of full-rank extreme points of {\bf AS}. Properties of certain interior points are also explored. We further show that by examining the boundary of the above set,  it is possible to develop an algorithm to generate the absolutely separable states which stay outside the maximal ball. By considering paradigmatic noise models, we find  the amount of local noise which the input entangled states can sustain, so that the output states do not become absolutely separable. Interestingly, we report that with the decrease of entanglement of the pure input state, critical depolarizing noise value, transferring an entangled state to  an absolutely separable one, increases, thereby showing advantages of sharing nonmaximally entangled states. Furthermore, when the input two-qubit states are Haar uniformly generated,  we report a hierarchy among quantum channels according to the generation of absolutely separable states.
\end{abstract}
\maketitle

\section{Introduction}\label{sec1}
Characterization of resourceful quantum states is important from the perspective of several quantum information processing tasks \cite{Preskill98, Nielsen00}. These include quantum communication protocols like quantum state transfer \cite{Bennett93, Bouwmeester1997} using  entangled states \cite{Horodecki09-1, Guhne09}, encoding of classical information into quantum states \cite{Bennett92, Mattle96}, secure communication via entangled states \cite{Ekert91, Bostrom02, Beige02, Gisin02, Scarani09}, and measurement-based quantum computation \cite{Raussendorf01, Hein04, Briegel09}. In a resource theory, along with the characterization of set of states according to certain tasks, understanding the set of operations, known as free operations, by which resourceful states cannot be created is also important. For example, in the theory of entanglement, the set of local operations and classical communication constitute the free operations by which only separable states can be produced. Therefore, characterizing useless states in any paradigm can be essential to understand the free operations.

Two-qubit gates or joint unitary operators acting on two-qubit pure states can, in general, create entanglement in the systems \cite{Kraus01, Leifer03, Chefles05}. For example, from the initial  product state $|-\rangle|0\rangle$, with $|-\rangle = \frac{1}{\sqrt{2}}( |0\rangle - |1\rangle)$, a two-qubit {\it CNOT} gate can create a maximally entangled state. However, it was shown that there are bipartite states from which it is not possible to generate entanglement by acting joint (global) unitary operations. They are called {\it absolutely separable states} \cite{Kus01, Verstraete01} (see also Refs.~\cite{Hildebrand07, Slater09, Johnston13, Ganguly14, Arunachalam15, Jivulescu15} in this regard). In a resource theory in which global unitary operators are free operations, the set of absolutely separable states are not useful states. However, it is important to understand the properties of such states from a resource theoretic point of view. 

Recently, the witness operators  (for entanglement witnesses, see Refs.~\cite{Horodecki96, Lewenstein00, Terhal01-1}) have been constructed to separate the absolutely separable states from the separable ones (which are not absolutely separable) \cite{Ganguly14} by using the fact that the set, containing absolutely separable states, is convex and compact, so that the {\it Hahn-Banach separation theorem} \cite{Holmes75} can be applied.
In this context, we note that to construct optimal witness operators, it is important to explore the boundary points of the set of absolutely separable states. Moreover, the {\it Krein-Milman theorem} \cite{Krein40} states that a convex compact set corresponding to a finite dimensional vector space is equal to the convex hull of the extreme points of that set. Therefore, to understand a convex compact set, it is enough to know about the extreme points of that set. 

The main objectives of the present work is twofold:~(i) We consider the characterization of the boundary of the set of absolutely separable states when the quantum system is associated with a $2\otimes d$ Hilbert space. In particular, we show the states having rank-$(2d-1)$ in $2\otimes d$ always are extreme points of the set. On the other hand, the states having full-rank can be interior as well as extreme points of the above set, which can be on or outside of the maximal ball. By exploring the boundary of this set, it is also possible to develop an algorithm to generate the absolutely separable states which stay outside the {\it maximal ball} (it is the maximal ball around a maximally mixed state in which all the states are separable) \cite{Zyczkowski98, Gurvits02}. Moreover, we construct two-qubit absolutely separable states which are boundary points but not extreme points of the set of absolutely separable states. We then prove the existence of full-rank extreme points of that set (see Fig. \ref{diag} for schematic representation). (ii) We further search for noisy scenarios which result in absolutely separable states. Finding such situations can be interesting from the perspective of experiments. Specifically, we find critical strengths of local noise for different prototypical noise models \cite{Preskill98, Nielsen00}, which lead to absolutely separable states. Further, we find that when Haar uniformly  generated  two-qubit states are sent through noisy channels,  the process of generating  an absolutely separable state can distinguish three quantum channels, depolarizing, amplitude damping and phase damping channels. Since entangled states cannot be generated from absolutely separable states by applying global unitary operators, effects of local decoherence on state space induces irreversibility in the theory of entanglement.

\begin{figure}[h!]
\includegraphics[scale = 0.3]{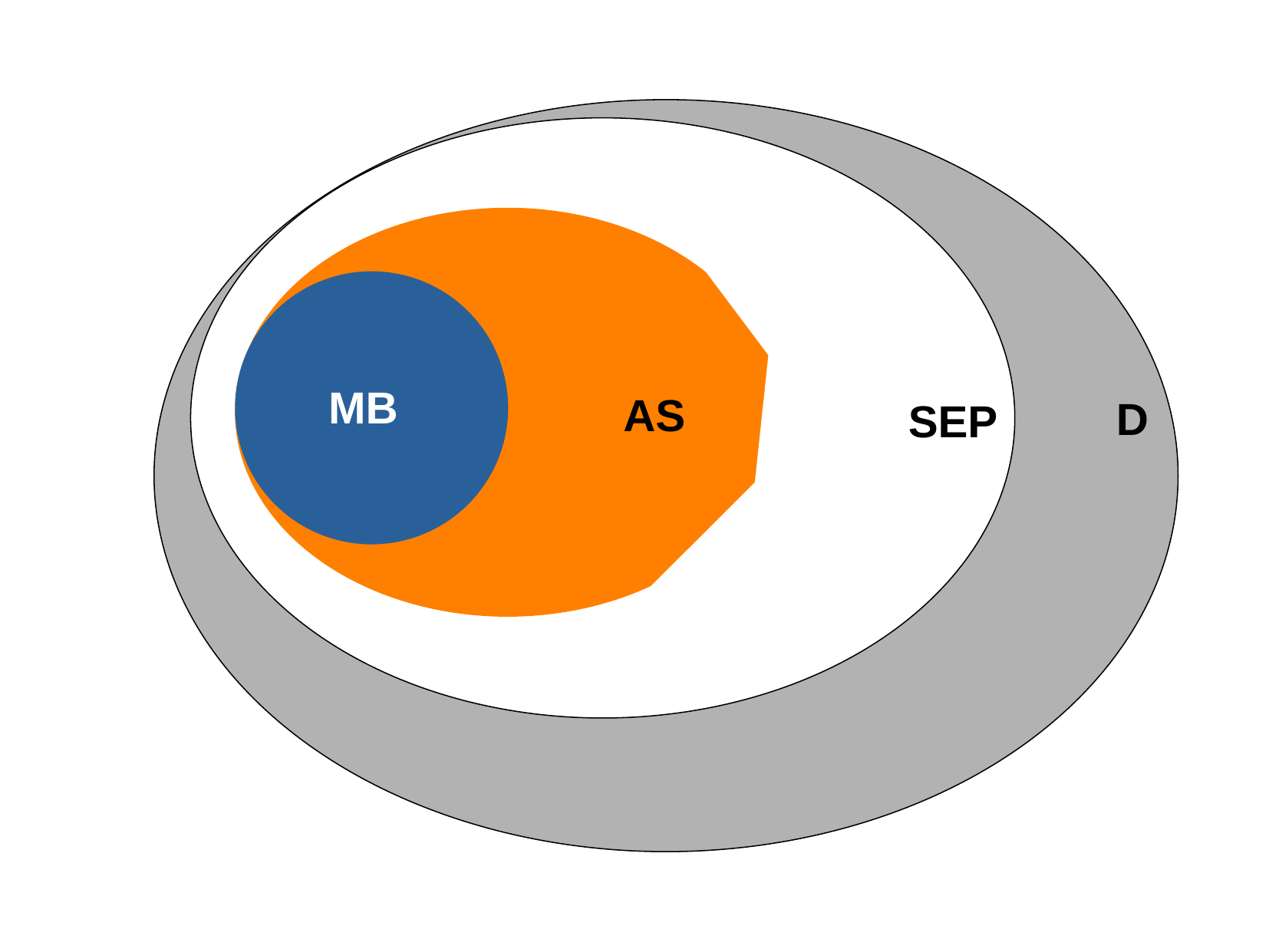}
\caption{(Color online) {\bf MB} represents the maximal ball. It is fully contained in {\bf AS}, the set of absolutely separable states, while the set of separable states is marked as {\bf SEP}. Similarly,  {\bf SEP} is the subset of {\bf D} denoting the entire state-space. In  $2\otimes2$, the portion where the boundaries of {\bf MB} and {\bf AS} are touching with each other, contains the rank-$3$ extreme points of {\bf AS} while  the region in the boundary of {\bf AS} which is not touching that of {\bf MB}, contains rank-$4$ extreme points of {\bf AS}. Furthermore, the line segments at the boundary of {\bf AS} represent the boundary points which are not extreme points.}
\label{diag}
\end{figure}

This paper is arranged as follows: After a few basics, presented in Sec.~\ref{sec2}, we characterize the set of absolutely separable states in Sec.~\ref{sec3}. In Sec.~\ref{sec4}, we address the question of irreversibility appearing due to the existence of absolutely separable states when an entangled state is affected by decoherence. Finally, we conclude in Sec.~\ref{sec5}.

\section{Preliminaries}\label{sec2}
Absolutely separable states are those which cannot be transformed to entangled states under the action of global unitary operators. For any given Hilbert space, the identity operator is an example of an absolutely separable operator. For a $2\otimes d$ Hilbert space, there is a {\it necessary and sufficient} condition to check whether a separable state is absolutely separable \cite{Verstraete01, Hildebrand07, Johnston13}. In particular, it was shown that in a \(2\otimes d\) system, a state is absolutely separable if and only if 
\begin{equation}\label{eq1}
\lambda_1 - \lambda_{2d -1} - 2 \sqrt{\lambda_{2d -2}\lambda_{2d}} \leq 0,
\end{equation}
where \(\lambda_1, \dots, \lambda_{2d} \) are the eigenvalues in decreasing order corresponding to a bipartite density matrix \(\rho_{AB}\), associated with a $2\otimes d$ Hilbert space.

With increasing importance of the entangled states, the properties of states lying in the neighborhood of the maximally mixed states \cite{Braunstein99} were studied. It was shown that the largest ball of separable as well as absolutely separable states around a maximally mixed state in a two-qubit system can be described by $\mbox{Tr}(\rho^2)\leq\frac{1}{3}$ \cite{Kus01, Zyczkowski98, Gurvits02}. This ball is known as a {\it maximal ball}. In general, for a $d\otimes d$ system, the maximal ball can be described by $\mbox{Tr}(\rho^2)\leq\frac{1}{d^2-1}$\cite{Zyczkowski98, Gurvits02}. However, it was found that there are absolutely separable states which reside outside the maximal ball \cite{Kus01}. So, to understand these states, it is important to construct such states.  

The possible structures of witness operators which separates absolutely separable states from the separable states were explored in \cite{Ganguly14}. But to make these operators optimal, one may require to explore the boundary points of the set of absolutely separable states. So, understanding these boundary points is one of the main objectives, discussed in the succeeding section. In this regard, remember that a state is not an extreme point of a convex set if it can be written as a convex combination of two or more absolutely separable states. On the other hand, if an absolutely separable state does not allow any such decomposition, then it must be an extreme point of the set.

\section{Characterization of Absolutely Separable States}\label{sec3}
Let us now concentrate on any two-party state, \(\rho_{AB}\) in \(2\otimes d\) which is absolutely separable. It is known that the pure product states can always be transformed to entangled states under global unitary operations. From condition (\ref{eq1}), it is also clear that rank-$2$ states cannot be absolutely separable states in $2\otimes d$ \cite{Verstraete01, Hildebrand07, Johnston13}. Moreover, the condition given in (\ref{eq1}), says that the mixed states having rank $\leq(2d-2)$ cannot be absolutely separable in $2\otimes d$. So, in $2\otimes2$, absolutely separable states can have rank $3$ and $4$. Since the set of absolutely separable states is a convex and compact set in any dimension \cite{Ganguly14}, it is possible to explore certain geometric properties of such a set, especially the structure of the extreme points of the set of absolutely separable states. Here, we call this set as {\bf AS}. We prove the following proposition for absolutely separable states having lowest rank in $2\otimes2$:

\begin{proposition}\label{prop1}
All rank-$3$ absolutely separable states are extreme points of {\bf AS} in $2\otimes 2$. 
\end{proposition}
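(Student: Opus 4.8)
The plan is to first use condition~(\ref{eq1}) to pin down the spectrum of any rank-$3$ absolutely separable state in $2\otimes 2$ completely, and then to show that in any convex decomposition of such a state the two components are forced to coincide with the state itself.

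First I would specialize~(\ref{eq1}) to $d=2$, so that the criterion reads $\lambda_1-\lambda_3-2\sqrt{\lambda_2\lambda_4}\le 0$. For a rank-$3$ state one has $\lambda_4=0$, so the criterion collapses to $\lambda_1\le\lambda_3$; combined with the ordering $\lambda_1\ge\lambda_2\ge\lambda_3>0$ this forces $\lambda_1=\lambda_2=\lambda_3=\tfrac13$. Hence every rank-$3$ absolutely separable state $\rho$ in $2\otimes 2$ equals $\tfrac13\Pi_V$, where $\Pi_V$ is the orthogonal projector onto the three-dimensional support $V=\mathrm{supp}(\rho)$.

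Next, let $\rho=p\,\sigma_1+(1-p)\,\sigma_2$ with $p\in(0,1)$ and $\sigma_1,\sigma_2\in{\bf AS}$; the goal is to conclude $\sigma_1=\sigma_2=\rho$. Since the weights are strictly positive and $\sigma_i\ge 0$, for any vector $x$ with $x\perp V$ the identity $\langle x|\rho|x\rangle=p\langle x|\sigma_1|x\rangle+(1-p)\langle x|\sigma_2|x\rangle=0$ forces $\langle x|\sigma_i|x\rangle=0$, hence $\sigma_i x=0$; thus $\mathrm{supp}(\sigma_i)\subseteq V$ and each $\sigma_i$ has rank at most $3$. But states of rank $\le 2d-2=2$ are not absolutely separable in $2\otimes 2$, as already read off from~(\ref{eq1}), so each $\sigma_i$ has rank exactly $3$ and is supported on $V$. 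Applying the spectral analysis of the first step to $\sigma_1$ and to $\sigma_2$ (their fourth eigenvalue again vanishes) yields $\sigma_1=\sigma_2=\tfrac13\Pi_V=\rho$. Therefore $\rho$ admits no nontrivial convex decomposition inside ${\bf AS}$ and is an extreme point.

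The argument is short; the only point demanding a little care is the support-nesting step, namely that the components of a convex decomposition must live on $\mathrm{supp}(\rho)$ and are therefore themselves rank-$3$ — and hence, by the rigidity forced by~(\ref{eq1}), equal to $\rho$. Everything else is a direct application of the absolute-separability criterion and the rank obstruction it entails.
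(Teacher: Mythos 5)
Your proof is correct and follows essentially the same route as the paper: condition~(\ref{eq1}) with $\lambda_4=0$ forces the spectrum $(\tfrac13,\tfrac13,\tfrac13,0)$, and uniqueness of the absolutely separable state on the three-dimensional support rules out any nontrivial convex decomposition. The only difference is that you spell out the support-nesting step (components of a convex decomposition live on $\mathrm{supp}(\rho)$ and cannot have rank $\le 2$), which the paper leaves implicit; this is a welcome clarification rather than a new approach.
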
 

\begin{proof}
In $2\otimes2$, a state is absolutely separable if and only if $\lambda_1-\lambda_3-2\sqrt{\lambda_2\lambda_4}\leq0$, where $\lambda_1,\dots,\lambda_4$ are the eigenvalues of a given density matrix in decreasing order. For a rank-$3$ state, $\lambda_4=0$, which implies that a rank-$3$ state is absolutely separable iff $\lambda_1\leq\lambda_3$. But we have assumed that $\lambda_1\geq\lambda_3$. This indicates that a rank-$3$ state is absolutely separable if and only if $\lambda_1 = \lambda_2 = \lambda_3 = \frac{1}{3}$. Now, consider the spectral decomposition of a state, $\rho_{AB}$ = $\frac{1}{3}\ket{\psi_1}\bra{\psi_1} + \frac{1}{3}\ket{\psi_2}\bra{\psi_2} + \frac{1}{3}\ket{\psi_3}\bra{\psi_3}$, where $\{\ket{\psi_i}\}$, $i$=$1,2,3$ are orthogonal states. Obviously, this state is the only absolutely separable state in the three-dimensional subspace spanned by $\{\ket{\psi_1}, \ket{\psi_2}, \ket{\psi_3}\}$. Therefore, this state cannot be written as a convex combination of two or more absolutely separable states, proving the fact that all rank-$3$ absolutely separable states are extreme points of {\bf AS}. \end{proof} 

Since, there is only one particular structure, possible for rank-$3$ absolutely separable states, any rank-$3$ state which does not capture such a structure must not be an absolutely separable state. Now, we make the following remark: 

\begin{remark}\label{rem1}
The number of rank-$3$ extreme points of {\bf AS} in $2\otimes2$ can be infinitely many and they reside on the maximal ball.
\end{remark}

This is because in a two-qubit Hilbert space, one can choose three orthogonal states in infinitely many possible ways and then, take an equal mixture of three orthogonal states, leaving the final state to be an extreme point of {\bf AS}. Next, using the condition given in (\ref{eq1}), the generalization of the above proposition for a $2\otimes d$ system is presented: 
\begin{proposition}\label{coro1} All rank-$(2d-1)$ absolutely separable states are extreme points of {\bf AS} in $2\otimes d$. \end{proposition}
\begin{proof}
The proof is along the same line as Proposition \ref{prop1}. It uses the fact that any state of {\bf AS}, having rank-$(2d-1)$, is the only state of {\bf AS} supported in a proper subspace (of $2\otimes d$ Hilbert space) spanned by a set of orthogonal states $\{\ket{\psi_i}\}_{i=1}^{2d-1}$. 
\end{proof}

The above state has a spectral decomposition: $\rho$ = $\frac{1}{2d-1}(\sum_{i=1}^{2d-1}\ket{\psi_i}\bra{\psi_i})$, where $\ket{\psi_i}$s are orthogonal states. Clearly, purity of the state $\rho$ is given by $\mbox{Tr}[\rho^2]$ = $\frac{1}{2d-1}$. In the context of Proposition \ref{prop1} and Proposition \ref{coro1}, see also Refs.~\cite{Arunachalam15,  Jivu15}.

\begin{remark}\label{rem2}
All rank-$(2d-1)$ absolutely separable states in $2\otimes d$ having purity $\frac{1}{2d-1}$ are included in the maximal ball characterized by $[\mbox{Tr}(\rho_{AB}^2)\leq\frac{1}{2d-1}]$ and they all lie on the surface of the ball.
\end{remark}

Interestingly, there exist rank-$4$ absolutely separable states in $2\otimes2$ which can stay outside the maximal ball \cite{Kus01}. In this context, we mention that there are ways to check if a given state is an absolutely separable state and if it belongs to the maximal ball but there is no known protocol to produce rank-$4$ absolutely separable states systematically in $2\otimes2$ which reside outside the maximal ball. We now propose a prescription to produce rank-$4$ absolutely separable states which reside outside the maximal ball. This can be done by exploring the boundary of {\bf AS}. 

\begin{enumerate}
\item Take a rank-3 state in $2\otimes 2$ which is not absolutely separable, i.e., such states must reside outside the maximal ball. From Proposition \ref{prop1}, it is clear that such states are of the form $\sum_{i=1}^3p_i\ket{\psi_i}\bra{\psi_i}$ with at least one $p_i$, not equal to the other $p_i$s.

\item Consider a pure state in $2\otimes2$ which is orthogonal to the previous state. Pure states are not included in the maximal ball and they cannot be absolutely separable either.

\item A suitable convex combination of these two states can produce rank-$4$ states in $2\otimes2$ which are absolutely separable. But we have to take the convex combination in such a way that the newly generated states reside outside the maximal ball for some choices of parameters.
\end{enumerate}

\paragraph*{Example.}Let us now illustrate the recipe discussed above by an example. We consider a two-qubit rank-$3$ state $\rho_1$ = $\frac{1}{2}\ket{00}\bra{00}+\frac{1}{4}\ket{01}\bra{01}+\frac{1}{4}\ket{10}\bra{10}$, which is not an absolutely separable state. Also consider a pure state $\rho_2$ = $\ket{11}\bra{11}$ which cannot be absolutely separable. These two states are orthogonal to each other. So, any convex combination of them must be a rank-$4$ state. We now take convex combination of these two states $q\rho_1+(1-q)\rho_2$ in a way that $q=\frac{16}{17}$. It can be checked that the newly prepared state is an absolutely separable state. Interestingly, if $q=\frac{16}{17}$, then $\mbox{Tr}[q\rho_1+(1-q)\rho_2]^2>\frac{1}{3}$, indicating that the state resides outside the maximal ball. We observe that the value of \(q\) is not independent of the choice \(\rho_1\).

Notice that in the above, when $q=\frac{16}{17}$, the newly prepared rank-$4$ state is just included into {\bf AS}. Therefore, it is a boundary point of the set. But it is not known whether the state is an extreme point of {\bf AS}. Note that the above protocol is quite easy to generalize for $2\otimes d$. In that case, one has to start with a rank-$(2d-1)$ state which is not included in the maximal ball as well as in {\bf AS} (this can be found by Proposition \ref{coro1}). Then consider a pure state which is orthogonal to the previous state and the rest is as described above. 

Let us now move to full-rank, i.e., rank-$4$ absolutely separable states in $2\otimes2$ and discuss different properties of a set consisting of such states. It is known that there exists a ball around the maximally mixed state, ($\mathbb{I}/d^2$), in $d\otimes d$ and all the states, belonging to that ball, are absolutely separable \cite{Zyczkowski98, Gurvits02}. So, the maximally mixed state is an interior point of {\bf AS} in $d\otimes d$. Clearly, any state $\rho$ (except the maximally mixed state) which is an interior point of {\bf AS}, can be written as a convex combination of the maximally mixed state ($\mathbb{I}/d^2$) and another absolutely separable state $\sigma$, where $\rho\neq\sigma$, and both $\rho$ and $\sigma$ are not maximally mixed. Note that for different states $\rho$, the states $\sigma$ can be different. However, the states which do not allow such a decomposition must be boundary points of {\bf AS}. We now present the following observation: 

\begin{proposition}\label{obs1}
In $2\otimes2$, any rank-$4$ absolutely separable state $\rho$ which satisfy the condition $\lambda_1-\lambda_3$ = $2\sqrt{\lambda_2\lambda_4}$, cannot be written as a convex combination of the maximally mixed state $(\mathbb{I}/4)$ and another absolutely separable state $\sigma$, not maximally mixed, with $\rho\neq\sigma$, where $\lambda_i$s are the eigenvalues of $\rho$ in decreasing order. \end{proposition}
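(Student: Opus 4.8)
The plan is to argue by contradiction. Suppose $\rho = p\,(\mathbb{I}/4) + (1-p)\,\sigma$ where $\sigma$ is absolutely separable, $\sigma$ is not maximally mixed, and $\rho\neq\sigma$; I will show that this forces the \emph{strict} inequality $\lambda_1-\lambda_3-2\sqrt{\lambda_2\lambda_4}<0$, contradicting the hypothesis that $\rho$ saturates condition~(\ref{eq1}).

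First I would set up the spectral picture. Since $\mathbb{I}/4$ commutes with $\sigma$, the two operators are simultaneously diagonalizable, so if $\mu_1\geq\mu_2\geq\mu_3\geq\mu_4$ are the eigenvalues of $\sigma$, those of $\rho$ are exactly $\lambda_i=\tfrac{p}{4}+(1-p)\mu_i$. Before proceeding I would clear the degenerate cases: $\rho\neq\sigma$ excludes $p=0$, while $p=1$ gives $\rho=\mathbb{I}/4$, which fails the boundary equality (there $\lambda_1-\lambda_3=0\neq\tfrac12=2\sqrt{\lambda_2\lambda_4}$); hence $p\in(0,1)$, and since $1-p>0$ the $\lambda_i$ remain in decreasing order, so $\lambda_2,\lambda_4$ are genuinely the second and fourth largest eigenvalues and pair correctly with $\mu_2,\mu_4$.

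Next I would invoke that $\sigma$ is absolutely separable, i.e.\ $\mu_1-\mu_3\leq 2\sqrt{\mu_2\mu_4}$ by~(\ref{eq1}), to bound $\lambda_1-\lambda_3=(1-p)(\mu_1-\mu_3)\leq 2(1-p)\sqrt{\mu_2\mu_4}$. It then remains to verify $2\sqrt{\lambda_2\lambda_4}>2(1-p)\sqrt{\mu_2\mu_4}$, which is elementary: comparing the squares of these two nonnegative quantities gives $4\lambda_2\lambda_4-4(1-p)^2\mu_2\mu_4=\tfrac{p^2}{4}+p(1-p)(\mu_2+\mu_4)>0$, using only $p>0$ and $\mu_2,\mu_4\geq 0$. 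Chaining the two inequalities yields $\lambda_1-\lambda_3<2\sqrt{\lambda_2\lambda_4}$, the desired contradiction, and the proposition follows.

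There is no genuinely hard step here: the argument rests entirely on the observation that mixing with $\mathbb{I}/4$ replaces each eigenvalue $x$ by $\tfrac{p}{4}+(1-p)x$, which contracts the gap $\lambda_1-\lambda_3$ by the factor $1-p<1$ while keeping $2\sqrt{\lambda_2\lambda_4}$ strictly above $2(1-p)\sqrt{\mu_2\mu_4}$ because of the additive $\tfrac{p}{4}$ shift, so the quantity $\lambda_1-\lambda_3-2\sqrt{\lambda_2\lambda_4}$ strictly decreases. The only points requiring a little care are the bookkeeping of the endpoints $p\in\{0,1\}$ and the claim that the spectral ordering of $\rho$ is inherited from that of $\sigma$; both are routine. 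Combined with the earlier remark that any state admitting no such decomposition is a boundary point of \textbf{AS}, this will show that every absolutely separable state saturating~(\ref{eq1}) lies on the boundary of \textbf{AS}.
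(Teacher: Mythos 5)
Your proof is correct and rests on the same key fact as the paper's: since $\mathbb{I}/4$ is proportional to the identity, any such decomposition forces $\sigma$ to share the eigenbasis of $\rho$ with eigenvalues related by the affine shift $x\mapsto \tfrac{p}{4}+(1-p)x$, under which the quantity $\lambda_1-\lambda_3-2\sqrt{\lambda_2\lambda_4}$ behaves monotonically. The paper runs the identical computation in the opposite direction (writing $\sigma=(\rho-\epsilon\,\mathbb{I})/(1-4\epsilon)$ and showing it violates condition~(\ref{eq1})), whereas you assume $\sigma$ absolutely separable and derive the strict inequality for $\rho$; your version also handles the endpoint cases and the eigenvalue ordering a bit more explicitly.
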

 
\begin{proof}
We consider a spectral decomposition of the above state $\rho$ = $\sum_{i=1}^4\lambda_i\ket{\psi_i}\bra{\psi_i}$. Now, consider a decomposition of $\rho$ as the following:
\small
\begin{equation}\label{eq2}
\begin{array}{l}
\rho = \sum_{i=1}^4\lambda_i\ket{\psi_i}\bra{\psi_i} = (1-4\epsilon)\sum_{i=1}^4\frac{(\lambda_i-\epsilon)}{1-4\epsilon}\ket{\psi_i}\bra{\psi_i}\\[1 ex]
+ (4\epsilon)\sum_{i=1}^4\frac{\epsilon}{4\epsilon}\ket{\psi_i}\bra{\psi_i} = (1-4\epsilon)\sigma + (4\epsilon)\frac{\mathbb{I}}{4},
\end{array}
\end{equation} 
\normalsize
where $\epsilon$ can be considered as a very small number and $\mathbb{I}$ is the identity operator. Now, the question is whether $\sigma$ is an absolutely separable state. The state $\sigma$ is absolutely separable if and only if the eigenvalues of it $(\lambda_i-\epsilon)/(1-4\epsilon)$ (in decreasing order) obey the condition (\ref{eq1}). If it is the case then $\lambda_1-\lambda_3\leq2\sqrt{(\lambda_2-\epsilon)(\lambda_4-\epsilon)}$. But this cannot be because we have assumed $\lambda_1-\lambda_3$ = $2\sqrt{\lambda_2\lambda_4}$. So, for any $\epsilon>0$, $(\lambda_1-\lambda_3)>2\sqrt{(\lambda_2-\epsilon)(\lambda_4-\epsilon)}$, leading to the fact that $\sigma$ cannot be absolutely separable. Thus, it is proved that if the eigenvalues of $\rho$ (in the decreasing order) obey the condition $\lambda_1-\lambda_3$ = $2\sqrt{\lambda_2\lambda_4}$ then the state cannot be written as a convex combination of the maximally mixed state and another absolutely separable state. \end{proof}

In the above, it is clear that any state $\rho$ whose eigenvalues (in the decreasing order) obey the condition $\lambda_1-\lambda_3$ = $2\sqrt{\lambda_2\lambda_4}$, are the boundary points of {\bf AS}. Next, we describe an important property of certain interior points of the set.

\begin{remark}
\label{rem_3}
It is interesting to note that by using similar arguments,  Proposition \ref{obs1} can be extended to $2\otimes d$. Let us consider the spectral decomposition of a state $\rho$ = $\sum_{i=1}^{2d}\lambda_i\ket{\psi_i}\bra{\psi_i}$, where  $\lambda_i$ are in decreasing order. Now, consider a decomposition of $\rho$ as 
\begin{equation}\label{eq2}
\begin{array}{l}
\rho = \sum_{i=1}^{2d}\lambda_i\ket{\psi_i}\bra{\psi_i} = (1-2d\epsilon)\sum_{i=1}^{2d}\frac{(\lambda_i-\epsilon)}{1-2d\epsilon}\ket{\psi_i}\bra{\psi_i}\\[1 ex]
+ (2d\epsilon)\sum_{i=1}^{2d}\frac{\epsilon}{2d\epsilon}\ket{\psi_i}\bra{\psi_i} = (1-2d\epsilon)\sigma + (2d\epsilon)\frac{\mathbb{I}}{2d},
\end{array}
\end{equation} 
where $\epsilon$ can be considered as a very small number and $\mathbb{I}$ is the identity operator. Now, the question is whether $\sigma$ is an absolutely separable state. The state $\sigma$ is absolutely separable if and only if the eigenvalues of it $(\lambda_i-\epsilon)/(1-2d\epsilon)$ (in the decreasing order) obey the condition, given by 
\begin{equation}
\begin{array}{c}
\frac{(\lambda_1-\epsilon)}{(1-2d\epsilon)}-\frac{(\lambda_{2d-1}-\epsilon)}{(1-2d\epsilon)}\leq2\sqrt{\frac{(\lambda_{2d-2}-\epsilon)}{(1-2d\epsilon)}\frac{(\lambda_{2d}-\epsilon)}{(1-2d\epsilon)}}.
\end{array}
\end{equation}
If it is the case then $\lambda_1-\lambda_{2d-1}\leq2\sqrt{(\lambda_{2d-2}-\epsilon)(\lambda_{2d}-\epsilon)}$. But this cannot be true if we begin with $\lambda_1-\lambda_{2d-1}$ = $2\sqrt{\lambda_{2d-2}\lambda_{2d}}$. The reason behind it is that for any $\epsilon>0$, $\lambda_1-\lambda_{2d-1}>2\sqrt{(\lambda_{2d-2}-\epsilon)(\lambda_{2d}-\epsilon)}$, leading to the fact that $\sigma$ cannot be absolutely separable. Thus, it is proved that if the eigenvalues of $\rho$ (in the decreasing order) obey the condition $\lambda_1-\lambda_{2d-1}$ = $2\sqrt{\lambda_{2d-2}\lambda_{2d}}$,  the state cannot be written as a convex combination of the maximally mixed state and another absolutely separable state. This also implies that the states, which satisfy the above equality, are boundary points of {\bf AS} when the associating Hilbert space is $2\otimes d$.
\end{remark}

\begin{proposition}\label{prop2}
In $2\otimes2$, if the eigenvalues (in the decreasing order) of a rank-$4$ absolutely separable state (except the maximally mixed state) satisfy the condition $\lambda_1 - \lambda_{3} < 2 \sqrt{\lambda_{2}\lambda_{4}}$ the state can be written as a convex combination of the maximally mixed state and a boundary point of {\bf AS}. 
\end{proposition}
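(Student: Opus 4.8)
The plan is to ``push'' $\rho$ radially outward from the maximally mixed state until it first leaves $\mathbf{AS}$: that exit point $\tau$ will be the required boundary point, and because it sits on the ray through $\mathbb{I}/4$ and $\rho$ with $\rho$ strictly between them, $\rho$ will automatically be the sought convex combination. Concretely, for $s\ge 1$ I would set $\rho_s=\frac{\mathbb{I}}{4}+s\bigl(\rho-\frac{\mathbb{I}}{4}\bigr)$. Since $\rho\neq\mathbb{I}/4$ has rank $4$, its ordered eigenvalues $\lambda_i(s)=\tfrac14+s(\lambda_i-\tfrac14)$ keep their order for every $s\ge 0$ (as $\lambda_i(s)-\lambda_{i+1}(s)=s(\lambda_i-\lambda_{i+1})\ge0$), the trace stays $1$, and the only eigenvalue that can turn negative is $\lambda_4(s)$, which vanishes at $s_{\max}=\tfrac{1}{1-4\lambda_4}$. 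Because $0<\lambda_4<\tfrac14$, we have $s_{\max}>1$, so $\rho_s$ is a legitimate density matrix for all $s\in[1,s_{\max}]$.

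Next I would reduce the whole question to the single continuous scalar function
\begin{equation*}
f(s)=\lambda_1(s)-\lambda_3(s)-2\sqrt{\lambda_2(s)\,\lambda_4(s)},
\end{equation*}
which by Eq.~(\ref{eq1}) with $d=2$ obeys $\rho_s\in\mathbf{AS}\iff f(s)\le0$; the hypothesis is precisely $f(1)<0$. Set $s^\ast=\max\{\,s\in[1,s_{\max}]:f(s)\le0\,\}$. This maximum exists because $\mathbf{AS}$ is closed and $[1,s_{\max}]$ is compact, and $s^\ast>1$ because the \emph{strict} inequality $f(1)<0$ keeps $f<0$ near $s=1$. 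Writing $\tau=\rho_{s^\ast}$, the identity $\rho-\frac{\mathbb{I}}{4}=\frac{1}{s^\ast}\bigl(\tau-\frac{\mathbb{I}}{4}\bigr)$ with $\tfrac{1}{s^\ast}\in(0,1)$ rearranges to $\rho=\bigl(1-\tfrac{1}{s^\ast}\bigr)\frac{\mathbb{I}}{4}+\tfrac{1}{s^\ast}\tau$, a genuine convex combination of $\mathbb{I}/4$ and $\tau$; what remains is only to certify that $\tau$ is a boundary point of $\mathbf{AS}$.

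For that I would distinguish two cases. If $s^\ast<s_{\max}$, then $\lambda_4(s^\ast)>0$, so $\tau$ is rank $4$, and maximality of $s^\ast$ together with continuity of $f$ forces $f(s^\ast)=0$; thus the eigenvalues $\mu_i$ of $\tau$ satisfy $\mu_1-\mu_3=2\sqrt{\mu_2\mu_4}$, and Proposition~\ref{obs1} identifies $\tau$ as a boundary point. If instead $s^\ast=s_{\max}$, then $\lambda_4(s^\ast)=0$, so $\tau$ is absolutely separable of rank exactly $3$ (it cannot have smaller rank, since rank-$2$ states are never absolutely separable); by Proposition~\ref{prop1} it is an extreme point of $\mathbf{AS}$, and since $\mathbf{AS}$ contains the maximal ball it has nonempty interior, so every extreme point is a boundary point. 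In both cases $\tau$ is a boundary point, which closes the argument.

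The genuine work here is not algebraic but lies in the bookkeeping along the ray: one must check that the spectral ordering $\lambda_1(s)\ge\cdots\ge\lambda_4(s)$ persists (so that Eq.~(\ref{eq1}) stays the valid absolute-separability test and the square root stays real), that $\rho_s$ remains a state exactly up to $s_{\max}$, and --- the most delicate, case-sensitive step --- that the first exit point is correctly read off as a boundary point whether it is still full rank (invoke Proposition~\ref{obs1}) or has dropped to rank $3$ (invoke Proposition~\ref{prop1}). The one subtlety to handle with care is that the hypothesis is the strict inequality: this is exactly what yields $s^\ast>1$ and hence a non-degenerate decomposition.
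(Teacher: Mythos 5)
Your proof is correct and follows essentially the same route as the paper: your ray $\rho_s=\frac{\mathbb{I}}{4}+s\left(\rho-\frac{\mathbb{I}}{4}\right)$ with $s=1/(1-4\epsilon)$ is exactly the paper's decomposition $\rho=(1-4\epsilon)\sigma+4\epsilon\,\frac{\mathbb{I}}{4}$, and both arguments identify the second component as a boundary point by forcing equality in Eq.~(\ref{eq1}) and invoking Proposition~\ref{obs1}. Your continuity/first-exit argument, together with the explicit treatment of the degenerate rank-$3$ exit when $\lambda_1=\lambda_2=\lambda_3$, is just a slightly more careful version of the paper's step of solving the quadratic for $\epsilon\leq\lambda_4$.
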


\begin{proof}
We consider the states other than the maximally mixed state. Suppose that the spectral decomposition of such a state is given by $\rho$ = $\sum_{i=1}^4\lambda_i\ket{\psi_i}\bra{\psi_i}$. This state can be decomposed into the following form:
\small
\begin{equation}\label{eq3}
\begin{array}{l}
\rho = \sum_{i=1}^4\lambda_i\ket{\psi_i}\bra{\psi_i} = (1-4\epsilon)\sum_{i=1}^4\frac{(\lambda_i-\epsilon)}{1-4\epsilon}\ket{\psi_i}\bra{\psi_i}\\[1 ex]
+ (4\epsilon)\sum_{i=1}^4\frac{\epsilon}{4\epsilon}\ket{\psi_i}\bra{\psi_i} = (1-4\epsilon)\sigma + (4\epsilon)\frac{\mathbb{I}}{4},
\end{array}
\end{equation} 
\normalsize
Our goal is to prove that the state $\sigma$ in the above decomposition is a boundary point of {\bf AS}. Because if it is the case then the absolutely separable state $\rho$ can be written as the convex combination of the maximally mixed state and a boundary point of {\bf AS}. Notice that the eigenvalues of $\sigma$ are given by $(\lambda_i-\epsilon)/(1-4\epsilon)$. It is a boundary point of {\bf AS} if $\lambda_1-\lambda_3=2\sqrt{(\lambda_2-\epsilon)(\lambda_4-\epsilon)}$. We have assumed that $\lambda_1-\lambda_3<2\sqrt{\lambda_2\lambda_4}$. Clearly, it is possible to consider $\epsilon\leq\lambda_4$, such that $\sigma$ becomes a boundary point. The value of $\epsilon$ can be found by solving the quadratic equation $\lambda_1-\lambda_3 = 2\sqrt{(\lambda_2 - \epsilon)(\lambda_4 - \epsilon)}$. This completes the proof.
\end{proof}

Obviously, the states of Proposition \ref{prop2} are interior points of {\bf AS}. By Propositions \ref{obs1} and \ref{prop2}, we  analyzed the known necessary and sufficient condition, given in (\ref{eq1}), to a further extent for two qubits. This condition was to check whether a given state is absolutely separable or not. In our case, we have established the conditions to find out  whether a given absolutely separable state is a boundary point (when the equality holds) or an interior point (when the inequality holds) of the set consisting of  absolutely separable states. However, like Proposition \ref{prop1}, Proposition \ref{prop2} can also be generalized in $2\otimes d$ by following the same argument given in the above proof and hence  we have the following:

\begin{proposition}\label{coro2} For a $2\otimes d$ system if the eigenvalues (in decreasing order) of a given absolutely separable state (except the maximally mixed state) satisfy the condition $\lambda_1-\lambda_{2d-1}<2\sqrt{\lambda_{2d-2}\lambda_{2d}}$, the state can be written as a convex combination of the maximally mixed state and a boundary point of {\bf AS}. Obviously, such a state must be an interior point of {\bf AS}. \end{proposition}

Let us now address another important question: Are all boundary points of {\bf AS} extreme points? In the following, we show that this is not the case.

\begin{proposition}\label{prop3}
In $2\otimes2$, there exist rank-$4$ absolutely separable states which are the boundary points of {\bf AS} but they are not extreme points of the set. 
\end{proposition}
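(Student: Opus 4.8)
The plan is to exhibit an explicit rank-$4$ absolutely separable state $\rho$ that lies on the boundary of {\bf AS} but is a proper convex combination of two distinct elements of {\bf AS}. By the discussion following Proposition~\ref{obs1}, any rank-$4$ state whose ordered eigenvalues satisfy $\lambda_1-\lambda_3=2\sqrt{\lambda_2\lambda_4}$ is already a boundary point, so it is enough to decompose one such state nontrivially; I would do this by perturbing its eigenvalues \emph{tangentially to the saturation surface of condition (\ref{eq1})}.

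Concretely, I would fix an orthonormal basis $\{\ket{\psi_i}\}_{i=1}^{4}$ of $2\otimes2$ together with eigenvalues $\lambda_1>\lambda_2>\lambda_3>\lambda_4>0$ obeying $\sum_i\lambda_i=1$ and $\lambda_1-\lambda_3=2\sqrt{\lambda_2\lambda_4}$ (for instance $(\lambda_1,\lambda_2,\lambda_3,\lambda_4)=(\tfrac{15}{32},\tfrac{8}{32},\tfrac{7}{32},\tfrac{2}{32})$), set $\rho=\sum_i\lambda_i\ket{\psi_i}\bra{\psi_i}$, and for a small real parameter $s$ define $\sigma_{1}=\sum_i(\lambda_i+s\delta_i)\ket{\psi_i}\bra{\psi_i}$ and $\sigma_{2}=\sum_i(\lambda_i-s\delta_i)\ket{\psi_i}\bra{\psi_i}$, where $\delta_2=\lambda_2$, $\delta_4=\lambda_4$, and $\delta_1,\delta_3$ are determined by the linear conditions $\delta_1+\delta_2+\delta_3+\delta_4=0$ and $\delta_1-\delta_3=\lambda_1-\lambda_3$. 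Then $\rho=\tfrac12\sigma_1+\tfrac12\sigma_2$ and $\sigma_1\neq\sigma_2$ whenever $s\neq0$. The step requiring care is checking $\sigma_1,\sigma_2\in{\bf AS}$: since $\sqrt{xy}$ is positively homogeneous of degree one, $2\sqrt{(\lambda_2+s\lambda_2)(\lambda_4+s\lambda_4)}=(1+s)\,2\sqrt{\lambda_2\lambda_4}=(1+s)(\lambda_1-\lambda_3)=(\lambda_1+s\delta_1)-(\lambda_3+s\delta_3)$ (the last equality by the choice of $\delta_1,\delta_3$), and likewise with $s$ replaced by $-s$ when $|s|<1$; hence $\sigma_1$ and $\sigma_2$ also saturate condition (\ref{eq1}) and in particular satisfy it.

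To conclude, I would note that for $|s|$ small enough the perturbed numbers $\lambda_i\pm s\delta_i$ remain strictly positive and strictly decreasing in $i$, so they are genuinely the ordered eigenvalues of $\sigma_1$ and $\sigma_2$, which are therefore rank-$4$ boundary points of {\bf AS}. Thus $\rho$ is a rank-$4$ boundary point of {\bf AS} expressed as a nontrivial convex combination of two distinct absolutely separable states and so is not an extreme point, which proves the proposition. (Equivalently, one may keep $\rho$ fixed and simply write down $\sigma_1$ and $\sigma_2$ for one concrete small value of $s$.)

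The main obstacle is precisely guaranteeing $\sigma_1,\sigma_2\in{\bf AS}$: one cannot perturb $\rho$ along an arbitrary trace-preserving direction, because by Proposition~\ref{obs1} moving $\rho$ towards $\mathbb{I}/4$ already leaves {\bf AS}, and a generic tangential perturbation also fails since the surface $\lambda_1-\lambda_3=2\sqrt{\lambda_2\lambda_4}$ is strictly curved (strict concavity of the geometric mean $\sqrt{xy}$). The crux of the argument is to identify the one-parameter family of directions along which $\sqrt{\lambda_2\lambda_4}$ degenerates to a linear function --- equivalently, to observe that this curved boundary surface is ruled --- since a boundary point lying in the relative interior of a ruling segment is automatically a non-extreme boundary point.
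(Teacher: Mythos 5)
Your proposal is correct and follows essentially the same route as the paper: the paper's proof also works in a fixed eigenbasis and exploits that the saturation surface $\lambda_1-\lambda_3=2\sqrt{\lambda_2\lambda_4}$ contains straight segments along directions of constant ratio $\lambda_2/\lambda_4$, mixing two such saturating spectra (parametrized by $\kappa=\lambda_2/\lambda_4$, $\lambda_4$, $\lambda_4'$) to get a boundary point that is manifestly non-extreme. Your symmetric perturbation $\rho=\tfrac12\sigma_1+\tfrac12\sigma_2$ with $\delta_2=\lambda_2$, $\delta_4=\lambda_4$ is just the midpoint reformulation of that same ruled-surface argument.
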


\begin{proof}
We simply construct a class of states, the eigenvalues (in the decreasing order) of which satisfy the condition (\ref{eq1}) with equality and the constructed states allow some convex decomposition. Specifically, let us consider a two-qubit state $\sigma_1 = \sum_{i=1}^4\lambda_i\ket{\psi_i}\bra{\psi_i}$ (spectral decomposition), where $\lambda_i$s are in decreasing order and they satisfy the condition $\lambda_1-\lambda_3$ = $2\sqrt{\lambda_2\lambda_4}$. Similarly, we have another two-qubit state $\sigma_2 = \sum_{i=1}^4\lambda^\prime_i\ket{\psi_i}\bra{\psi_i}$ (spectral decomposition), with $\lambda^\prime_i$s being in decreasing order, satisfying the condition $\lambda^\prime_1-\lambda^\prime_3$ = $2\sqrt{\lambda^\prime_2\lambda^\prime_4}$. For any convex combination $x\sigma_1+(1-x)\sigma_2$, $0<x<1$, the newly generated state can have the following eigenvalues: $\mu_i$ = $x\lambda_i + (1-x)\lambda^\prime_i$, for $i=1,\dots,4$. $\mu_i$s are also in decreasing order. It can be shown that $\mu_1-\mu_3$ = $2\sqrt{\mu_2\mu_4}$ if $\lambda^\prime_2/\lambda^\prime_4$ = $\lambda_2/\lambda_4$. We assume $\lambda^\prime_2/\lambda^\prime_4$ = $\lambda_2/\lambda_4$ = $\kappa$. Using this along with the conditions $\sum_i\lambda_i=\sum_i\lambda^\prime_i=1$, it can be shown that $\lambda_1$ = $[1-(1+\kappa)\lambda_4+2\sqrt{\kappa}\lambda_4]/2$, $\lambda_2$ = $\kappa\lambda_4$, $\lambda_3$ = $[1-(1+\kappa)\lambda_4-2\sqrt{\kappa}\lambda_4]/2$. $\lambda^\prime_i$s also satisfy similar relations. [Notice that for $i=1,2,3$, $\lambda_i$s or $\lambda_i^\prime$s are the function of $\lambda_4$ or $\lambda_4^\prime$ respectively. Similarly, for $i=1,2,3$, $\mu_i$s are also the same function of $\mu_4$.] However, for proper choice of $\kappa$, $\lambda_4$, $\lambda^\prime_4$, one can get $\lambda_i$s and $\lambda^\prime_i$s in decreasing order. Thus, one can generate absolutely separable states which satisfy the condition given in (\ref{eq1}) with equality. Moreover, these states allow convex decomposition, implying the fact that such states are boundary points but not extreme points of {\bf AS}. To constitute an example, one may consider $\kappa = 2.5$, $\lambda_4 = 0.1$, and $\lambda^\prime_4 = 0.11$. 
\end{proof}

In the above context, we mention that if we assume for $\sigma_1$, $\lambda_1-\lambda_3 < 2 \sqrt{\lambda_2 \lambda_4}$ and for $\sigma_2$, $\lambda^\prime_1-\lambda^\prime_3<2\sqrt{\lambda^\prime_2\lambda^\prime_4}$, for any newly generated state, $\mu_1-\mu_3$ must be less than $2\sqrt{\mu_2\mu_4}$. Nevertheless, which states are the rank-$4$ extreme points of {\bf AS} in $2\otimes2$ is still an open problem but we are able to prove the existence of such states in the succeeding proposition.

\begin{remark}
Applying the  Remark \ref{rem_3} along with Proposition \ref{prop3}, we can now construct examples of boundary points which are not extreme points of {\bf AS} in $2\otimes d$. An explicit example can be given as follows:
We take $\nu_{2d}$ = $0.1$, $\nu_1$ = $[1-(1+\kappa)\nu_{2d}+2\sqrt{\kappa}\nu_{2d}]/2$, $\nu_{2d-2}$ = $\kappa\lambda_{2d}$, $\nu_{2d-1}$ = $[1-(1+\kappa)\nu_{2d}-2\sqrt{\kappa}\nu_{2d}]/2$. We can assume $\kappa = 2.5$. We next take $\nu_2=\cdots=\nu_{2d-3}=\kappa^\prime$, $\nu_{1}\geq\kappa^\prime\geq\nu_{2d-2}$ and then normalize the spectrum $\{\nu_i\}$ such that the normalized quantities $\{\nu_i^\prime\}$ satisfy the condition $\sum_{i=1}^{2d}\nu_i^\prime=1$. So, any density matrix, having eigenvalues $\{\nu_i^\prime\}$, is a boundary point of ${\bf AS}$.
In the similar fashion, if we take $\nu_{2d}$ = $0.11$ and follow the above procedure,  we can obtain another boundary point of ${\bf AS}$. Now, taking a suitable convex combination of these two boundary points, we can get a third boundary point which is surely not an extreme point of ${\bf AS}$ (as shown in  Fig. \ref{diag}).
\end{remark}

\begin{proposition}\label{prop4}
In $2\otimes2$, there exist rank-$4$ absolutely separable states outside the maximal ball which are extreme points of {\bf AS}. 
\end{proposition}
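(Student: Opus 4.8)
The plan is to establish the statement non-constructively, by combining the Krein--Milman theorem with the convexity of the purity functional $\rho\mapsto\mbox{Tr}(\rho^2)$ and then arguing by contradiction. First I would determine the possible ranks of extreme points of {\bf AS} in $2\otimes2$: condition~(\ref{eq1}) forbids pure states and rank-$2$ states from being absolutely separable, so every extreme point of {\bf AS} has rank $3$ or $4$; and by Proposition~\ref{prop1} every rank-$3$ absolutely separable state equals $\frac13\sum_{i=1}^{3}\ketbra{\psi_i}{\psi_i}$ for some orthonormal triple, hence has purity exactly $\mbox{Tr}(\rho^2)=\frac13$. Thus every extreme point of rank at most $3$ lies on the surface of the maximal ball $\{\mbox{Tr}(\rho^2)\le\frac13\}$.

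Next I would suppose, toward a contradiction, that every rank-$4$ extreme point of {\bf AS} also satisfies $\mbox{Tr}(\rho^2)\le\frac13$, i.e.\ lies inside or on the maximal ball. Together with the first step this forces \emph{every} extreme point $e$ of {\bf AS} to obey $\mbox{Tr}(e^2)\le\frac13$. Since {\bf AS} is convex and compact in a finite-dimensional real vector space~\cite{Ganguly14}, the Krein--Milman (Minkowski) theorem~\cite{Krein40} yields that {\bf AS} is the convex hull of its extreme points, so any $\rho$ in {\bf AS} admits a representation $\rho=\sum_k p_k e_k$ with each $e_k$ extreme, $p_k\ge0$ and $\sum_k p_k=1$. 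Because $X\mapsto\mbox{Tr}(X^2)$ is convex on Hermitian matrices --- indeed $\mbox{Tr}(X^2)+\mbox{Tr}(Y^2)-2\mbox{Tr}(XY)=\mbox{Tr}((X-Y)^2)\ge0$ --- this would give $\mbox{Tr}(\rho^2)\le\sum_k p_k\,\mbox{Tr}(e_k^2)\le\frac13$, i.e.\ all of {\bf AS} would lie inside the maximal ball.

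This contradicts the established fact~\cite{Kus01} that there are absolutely separable states in $2\otimes2$ strictly outside the maximal ball --- for instance the rank-$4$ state $\tfrac{16}{17}\rho_1+\tfrac{1}{17}\rho_2$ of the example following Proposition~\ref{obs1}. Hence {\bf AS} must possess an extreme point $e^\star$ with $\mbox{Tr}((e^\star)^2)>\frac13$. By the first step such an $e^\star$ cannot have rank at most $3$, so it is a rank-$4$ state, and $\mbox{Tr}((e^\star)^2)>\frac13$ places it outside the maximal ball, which is exactly the claim.

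I do not anticipate a genuine obstacle here; the only points needing care are invoking the correct finite-dimensional form of Krein--Milman (a compact convex subset of $\mathbb{R}^n$ \emph{is} the convex hull, not merely the closed convex hull, of its extreme points) and checking convexity of the purity, both of which are routine. If one wanted an explicit such extreme point rather than a pure existence statement, the natural but considerably more laborious alternative would be to start from a concrete rank-$4$ state saturating $\lambda_1-\lambda_3=2\sqrt{\lambda_2\lambda_4}$ with $\mbox{Tr}(\rho^2)>\frac13$ and to rule out every nontrivial convex decomposition of it inside {\bf AS}; the argument above already settles existence with far less effort.
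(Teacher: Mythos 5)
Your proof is correct and follows essentially the same route as the paper's: rank-$3$ extreme points all have purity $\tfrac13$, purity is convex, and an absolutely separable state outside the maximal ball (e.g.\ the $q=\tfrac{16}{17}$ example) therefore cannot be generated from extreme points inside the ball, forcing the existence of rank-$4$ extreme points outside it. The only difference is presentational: you make the Krein--Milman step and the contradiction explicit, whereas the paper states the same dichotomy informally.
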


\begin{proof}
In Proposition \ref{prop1}, we have proved that all rank-$3$ absolutely separable states are extreme points of {\bf AS}. Moreover, they reside on the surface of {\it maximal ball}, defined by $\mbox{Tr}[\rho^2]\leq\frac{1}{3}$. We mention here that the maximal ball is a convex set, based on the fact that a convex combination does not allow to increase purity. We know that there are rank-$4$ absolutely separable states outside the maximal ball as depicted in Fig. \ref{diag}. An example of such a state is explicitly constructed after Proposition \ref{prop1}. Obviously, these rank-$4$ states cannot be written as a convex combination of rank-$3$ extreme points of {\bf AS} since these states reside outside the maximal ball. Therefore, either such a state is an extreme point of {\bf AS}, or they can be written as a convex combination of rank-$4$ extreme points of {\bf AS} out side the maximal ball. This completes the proof.
\end{proof}

\section{Interconvertability between absolutely separable and entangle state}\label{sec4}
Entanglement is a resource for different quantum information processing tasks. Since absolutely separable states cannot be transformed to entangled states under global unitary operations, their existence puts a restriction on the state space. First, we discuss how auxiliary systems can help to overcome absolute separability and prescribe a method to identify operations on {\bf AS} so that it becomes entangled. Second, we address how an entangled state get converted into separable or absolutely separable state through noisy channels.

\subsection{Qubit assisted entanglement generation}\label{sec4sub1}
It is easy to notice that given an absolutely separable state if one enlarges the local dimension by considering auxiliary qubit(s), one may find unitary operator(s), acting on the newly generated higher dimensional state, which can produce entanglement. So, the newly generated state, is basically the given absolutely separable state along with the auxiliary qubit(s), which corresponds to the higher dimensional Hilbert space. Now the question is: which kind of auxiliary qubit(s), one should take along with a given absolutely separable state such that at least one unitary operator exists, which acts on the newly generated state and can produce entanglement. Clearly, there might be a restriction on the form of the auxiliary qubit(s), depending on the given absolutely separable state, which can create entangled states. In this regard, we present the following: 

\begin{observation}\label{obs2} 
If the absolutely separable state is of full-rank, there can be a restriction on the form of the auxiliary qubit, while for a non-full-rank absolutely separable state, any auxiliary qubit  together with an absolutely separable state can   produce entangled states
by applying global unitary operator(s) which are being applied on the newly generated state in the higher dimensional Hilbert space.
\end{observation} 

The first part is simple to establish. Suppose, for a two-qubit system, the maximally mixed state is given. Obviously, that state is an absolutely separable state. Now, if the second party enlarges the local dimension by considering an auxiliary qubit, then it must not be prepared in the maximally mixed state in order to generate entanglement by applying a global unitary operator. Because if the auxiliary qubit is prepared in a maximally mixed state, the overall state is again a maximally mixed state in $2\otimes4$ and thus, generating entanglement using any global unitary operator is not possible. 

On the other hand, suppose, a non-full rank absolutely separable state is given in $2\otimes2$. Then any auxiliary qubit can be chosen if one wants to enlarge the local dimension, so that a separable state of rank lower than $2d-1$ can be obtained, which by construction is not absolutely separable. So, using the final state, entanglement can be generated via a suitable global unitary operator. This follows from the fact that in $2\otimes d$, a state which have rank less than $2d-1$ cannot be absolutely separable. 

From the above observation,  we prescribe a possible way of creating entangled state from {\bf AS} and derive Kraus operators for that kind of operations. Suppose a global unitary operator, \(U\),  acts both on an absolutely separable state and  an  auxiliary state $|0\rangle\langle 0|$. Mathematically, we can say the following:
\begin{equation}
\small
\begin{array}{c}
\Lambda(\rho_{AS})=Tr_B[U(\rho_{AS}\otimes |0\rangle_B\langle 0|)U^{\dagger}]=\sum_{\mu} K_{\mu}\rho_{AS}K_{\mu}^{\dagger},
\end{array}
\end{equation}
where $K_{\mu}=\langle \mu|U|0\rangle$ with $\sum K_{\mu}^{\dagger} K_{\mu}=\mathbb{I}$.

\subsection{Identifying  absolutely separable states in noisy environments}
\label{sec4sub2}
In realistic scenarios, due to imperfections in preparation procedure or interaction with the environment, the entangled state generated in the laboratory is not a pure one. This effect of noise can be modeled in various ways. Typically, a local noisy channel destroys entanglement, and thereby creates a separable state \cite{Horodecki09-1} which may or may not be absolutely separable. We are interested in finding the amount of noise required in the channel to produce absolutely separable states. Such a study can be important for two reasons -- (1) in experiments, a state is always affected by noise and (2) determining the range of noise parameter which makes the state  absolutely separable, is significant to avoid them.

We deal with three paradigmatic noise models: depolarizing, amplitude damping, and phase damping channels which can affect an initial state differently and, thereby, generate absolutely separable states in independent ways. Before discussing the consequence of noisy channels on a given state, let us first fix the transformation of the input state that happens due to the interactions between the environment and the system.  
The depolarizing channel (DPC), \(  \Lambda_{DPC}\) takes an arbitrary quantum state, \(\rho\) to \(\rho' = \Lambda_{DPC} (\rho) =  p \rho + \frac{(1-p)}{3} \sum_{i=x, y, z} \sigma_i \rho \sigma_i\) where \(\sigma_i\) \((i=x, y, z) \)s  represent the Pauli operators and (\(1-p\)) is the strength of the noise.  

On the other hand, the amplitude damping channel (ADC), $\Lambda_{ADC}$ acts asymmetrically on the states \(|0\rangle \) and \(|1\rangle\). In particular, it keeps \(|0\rangle\) unchanged while  it flips  \(|1\rangle\) to  \(|0\rangle\) with probability \((1-p)\) and the corresponding Kraus operators which describe the effects of ADC are given by
\begin{equation}\label{eq4}
K_{1} = \left(\begin{array}{cc}
1 & 0\\
0 & \sqrt{p}
\end{array}\right),~~
K_{2}=\left(\begin{array}{cc}
0 & \sqrt{1-p}\\
0 & 0
\end{array}\right).
\end{equation}
The input state \(\rho\) after sending through the ADC results in an output state, represented as
\begin{equation}\label{eq5}
\Lambda_{ADC}(\rho) =\sum_{i=1}^{2} K_{i}\rho K^{\dagger }_{i}. 
\end{equation}
Finally, we consider a scenario in which a qubit is sent through a phase damping channel (PDC), described by
\begin{equation}\label{eq6}
\Lambda_{PDC}( \rho) =  \sum_{i=1}^3 K_{i}\rho K^{\dagger }_{i}, 
\end{equation}
with the operators $K_i$, $i=1,2,3$ given by
\begin{equation}\label{eq7}
\begin{array}{c}
K_1 = \sqrt{1-p}\left(\begin{array}{cc}
1 & 0\\
0 & 1
\end{array}\right),~~
{K_2} = \sqrt{p}\left(\begin{array}{cc}
1&0\\
0&0
\end{array} \right),\\[3ex]
{K_3} = \sqrt {p}\left(\begin{array}{cc}
0&0\\
0&1
\end{array}\right).
\end{array} 
\end{equation}
A noisy environment always takes an initial pure state, \(|\psi\rangle = \cos \frac{x}{2} |00\rangle + e^{-i \phi} \sin \frac{x}{2}|11\rangle\) with \(0 \leq x \leq \pi\) and \(0 \leq \phi \leq 2 \pi \) to a mixed one which may or may not be an entangled state. Note that the input  state \(|\psi \rangle \)  is entangled for all values of \(x\) and \(\phi\) except when \(x  = 0 \) or \(\pi\).  We assume here that two independent and identical channels, \(\Lambda^1 (p) \otimes \Lambda^2 (p)\), act on the input state where \(\Lambda^i\) can be either DPC, ADC, or PDC, and we are interested with the properties of output state. 

\subsubsection{Absolutely separable states via depolarizing channels}

\begin{figure}
\includegraphics[scale = 0.6]{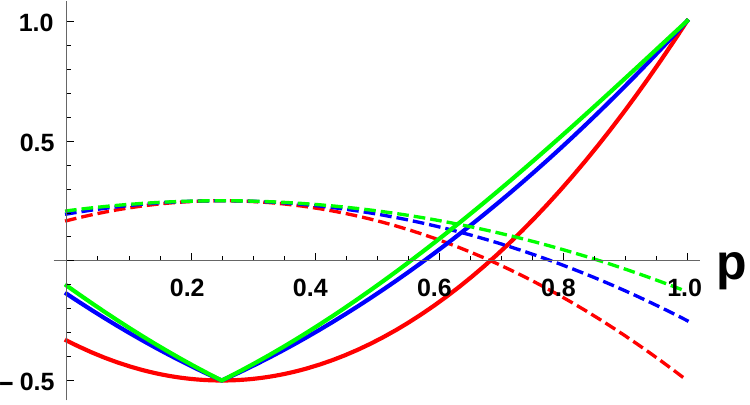}
\caption{(Color online) Creation of absolutely separable states from a pure state \(|\psi \rangle = \cos (x/2)|00\rangle + e^{-i \phi} \sin (x/2)|11\rangle\). Note that $|\psi\rangle$ is maximally entangled with $x=\pi/2$. It is sent through local depolarizing channels which produce separable as well as absolutely separable states with the variation of parameter \(p\) (abscissa). Red, blue, and green lines correspond to $x=\pi/2$, $x=\pi/6$, and $x=\pi/12$ respectively. Dashed lines correspond to  the minimum eigenvalues of the partial transposed output states for different values of \(x\) (ordinate) , thereby quantifying the entanglement contents of the output states while  solid lines represent the  quantity $\lambda_1-\lambda_3-2\sqrt{\lambda_2\lambda_4}$ (ordinate) for examining the absolute separability in $2\otimes2$, given in (\ref{eq1}). It is clear that there exists a range of $p$ where the output state is separable but not absolutely separable when the input state is a pure nonmaximally entangled state while for maximally entangled state, the critical value of noise above which state is separable as well as absolutely separable coincide (see red solid and dashed lines). Both axes are dimensionless.} \label{DPC}
\end{figure}

Let us start the investigation by studying the features of the output state obtained after sending the initial state via local depolarizing channels (also see Ref.~\cite{Filippov17} in this regard):
\begin{equation}\label{eq8}
|\psi \rangle \rightarrow \Lambda^1_{DPC} \otimes  \Lambda^2_{DPC} (|\psi\rangle \langle \psi|) = \rho'(p, x, \phi).\end{equation} 
The characteristics of the resulting state can be summarized as the following: 
\begin{enumerate}
\item The output state  is a rank-$4$ state. As discussed in Sec.~\ref{sec3}, the full-rank states can in principle be an absolutely separable for certain choices of \(p\), \(x\), and \(\phi\). For any fixed values of $x$, we find that there exists a critical value of noise above which the output state is absolutely separable.

\item Two eigenvalues of the output state are the same and are independent of the input state parameter, \(x\) and \(\phi\), while the other two  depend on \(x\) and \(p\). For a fixed value of \(x\), we observe that there exists a critical value of noise \(1-p_{abs}\), above which the state is absolutely separable (see Fig.~\ref{DPC}).

\item The partial transposed output state leads to the condition that the state is entangled when \(\sin x > [4(1+2p)(1-p)]/(4p-1)^2\) as depicted in Fig.~\ref{DPC}. 
\end{enumerate}
Remember that we have taken here the strength of the noise as $(1-p)$ while in this figure, the plots are made with respect to the parameter $p$. However, looking at properties 2 and 3, we observe that when \(x < \pi/2\), \((1-p_{abs})\) is strictly greater than \((1-p_{sep})\), below which the state is entangled. Interestingly, when \(x =\pi/2\), i.e., the input state is maximally entangled, $(1-p_{abs}) = (1-p_{sep})$. The gap obtained between the threshold values of noise for separability and absolute separability when the input state is a nonmaximally entangled state, can be interesting. We know that nonmaximally entangled states are less  useful than maximally entangled states in several quantum information protocols. Nonetheless, such disadvantages can be compensated since in the presence of a certain amount of noise, separable states that are not absolutely separable are created which can be converted to entangled states under global unitary operations. This is shown in Table \ref{tab1}. 

We also observe that if the entanglement of the input state decreases, the output state becomes separable quickly under the action of local DPCs. But lowering the input state entanglement, the range of the noise in which the output state remains only separable (not absolutely separable) increases.
 Specifically, for maximally entangled and nonmaximally entangled states, denoted by  \(|\psi_{\max}\rangle\) and  \(|\psi_{\mbox{nm}}\rangle\) respectively,  we obtain that \[(1-p_{sep}) (|\psi_{\max}\rangle) > (1 - p_{sep})  (|\psi_{\mbox{nm}}\rangle)\] while \[(1-p_{abs}) (|\psi_{\max}\rangle) < (1 - p_{abs})  (|\psi_{\mbox{nm}}\rangle).\] 
\begin{table}[ht]
\centering
\begin{tabular}{ |c|c|c|c|} 
\hline
& & & \\
\textbf{\textit{Entanglement of input}} & $1-p_{sep}$  & $1-p_{abs}$ & $\Delta p = p_{sep} - p_{abs}$ \\
& & &  \\
\hline
$0.7715$ & $0.29133$ & $0.36114$ & $0.0698$ \\ 
\hline
$0.33225$ & $0.21413$ & $0.426103$ & $0.21197$ \\ 
& & & \\
\hline
\end{tabular}
\caption{Entanglement of the input state is measured in von Neumann entropy \cite{Bennett00}. We show that with the decrease of entanglement of the input state, the range of  separability but not absolute separability increases.}\label{tab1}
\end{table}

\begin{remark}\label{rem3}
A maximally entangled state sent through a global depolarizing channel results in the Werner state \cite{Werner89}, given by \(p |\psi^-\rangle \langle \psi^-| + \frac{(1 -p)}{4} I\) (with \(|\psi^-\rangle\) being the singlet state) which is separable as well as absolutely separable with  \(p\leq 1/3\). On the other hand, if a nonmaximally entangled state is admixed with the white noise, the gap in the strength of noise between separability and absolute separability emerges like the local depolarizing channels, establishing the usefulness of nonmaximally entangled states over the maximally entangled ones.
\end{remark}

\emph{Generation of absolutely separable states from Haar uniformly simulated inputs.} Let us  generate two-qubit pure states Haar uniformly, and  both qubits  are sent through local depolarizing channels. We find a critical value of noise below which the states become absolutely separable. In particular, we plot  $p_c$ below which the state is absolutely separable by varying  the initial entanglement content of the pure states. Results show that with respect to noise, states with higher entanglement are more robust than the states with low entanglement value from the perspective of generation of  ASs (see Fig.~\ref{fig:HaarAS}).

\begin{figure}[h!]
\includegraphics[height=0.28\textheight, width=0.5\textwidth]{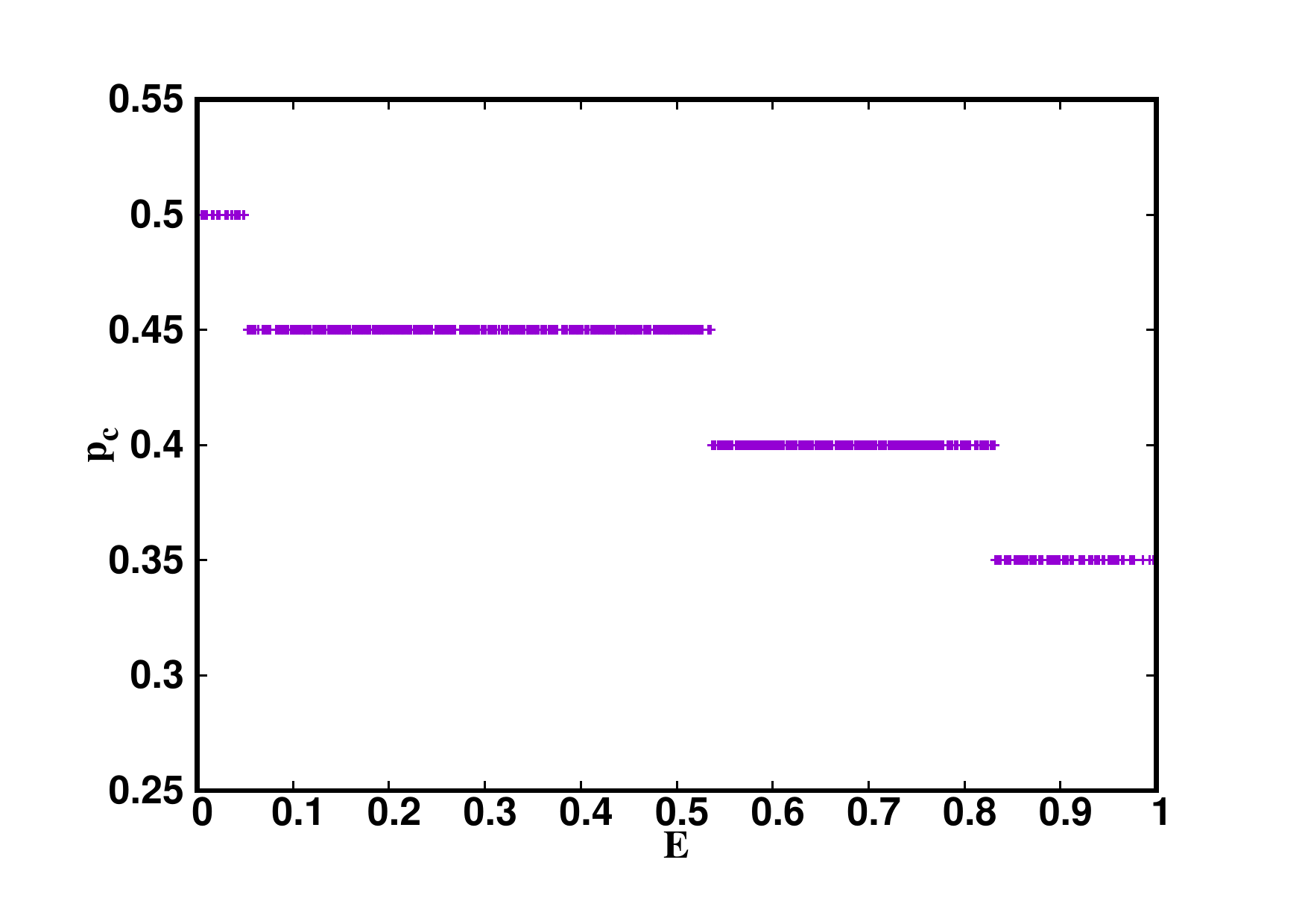}
\caption{(Color online.) Plot of $ p_c$ (critical noise value below which the state is absolutely separable) (vertical axis) against initial entanglement, $E$ (horizontal axis). We generate  \(10^4\)  pure states Haar uniformly.  Both qubits  are sent through the  local depolarizing channels. Entanglement of a pure state is characterized by the von Neumann entropy of the local density matrices. The vertical axis is dimensionless while the horizontal one is in ebits.}
\label{fig:HaarAS}
\end{figure}

\subsubsection{Absolutely separable states via amplitude damping channels}

\begin{figure}
\includegraphics[scale = 0.4]{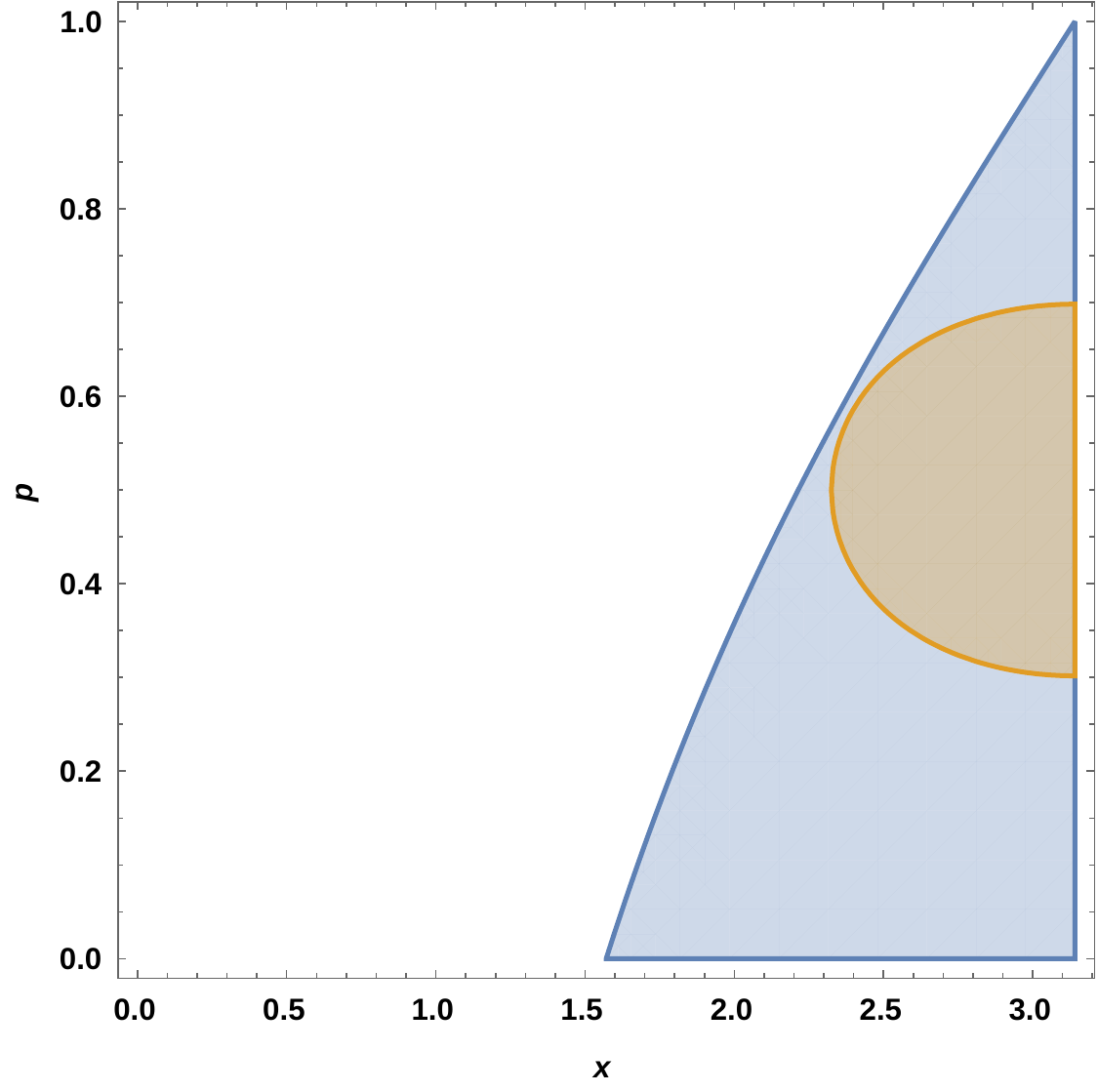}
\caption{(Color online) Regions in $(x,p)$-plane indicates separable and absolutely separable states when $|\psi\rangle$ is sent through local amplitude damping channels. The parameter $x$ corresponds to the state $\ket{\psi}$ and the parameter $p$ corresponds to the channels. The blue (bigger) region is for separability while the yellowish (smaller) region  represents states that are absolutely separable. The parameter $p$  in the vertical axis is a dimensionless quantity, while  the parameter $x$ in the horizontal axis is in radian.}\label{ADC}
\end{figure}

Let us now move to the scenario where two local ADCs are acted on  \(|\psi \rangle\), resulting in an output state of rank-$4$. It turns out that the output state is separable when \(\tan (x/2) \geq 1/(1-p)\). By using condition (\ref{eq1}), we can  find the condition on \(p\) and \(x\) for which the resulting state is absolutely separable. Here also the eigenvalues are independent of the phase of the initial state. 
 
Unlike the depolarizing channels, we find that the final state is absolutely separable only when \(x \geq2.325 \). The criteria for absolute separability are satisfied in the neighborhood of  \(p=0.5\). The range of \(p\) in which the state is absolutely separable increases with the increase of \(x\) and becomes maximum when \(x =\pi\). For example, when  \(x = \pi\), the state is separable in the entire range of  \(p\) although it is absolutely separable when \(p \in [0.302, 0.6998]\).  For the input state  with \(x = 2.4\), the state remains absolutely separable when \(p \in [0.414, 0.586]\) and separable for  \(0 \leq p \leq 0.61  \). Therefore, like depolarizing channels, there also exists a range of parameter \(p\) in which the state is separable, but not absolutely separable as depicted in Fig. \ref{ADC}. However, the absolute separability of the resulting state with respect to amplitude damping noise requires a minimum amount of entanglement in the input state which is in a sharp contrast to DPC. 
 
Interestingly, we find that in case of  amplitude damping channel (ADC),  there exist no random pure states that, after sending through the double-sided local channels become absolutely separable. If we start with rank-2 or higher rank Haar uniformly generated states and ADC acting on both the qubits, such states are produced although unlike depolarizing channels, we observe that the percentage of absolutely separable states grows with the increase in the rank of the initial states.

\subsubsection{Absolutely separable states via phase damping channels}

\begin{figure}
\includegraphics[scale = 0.4]{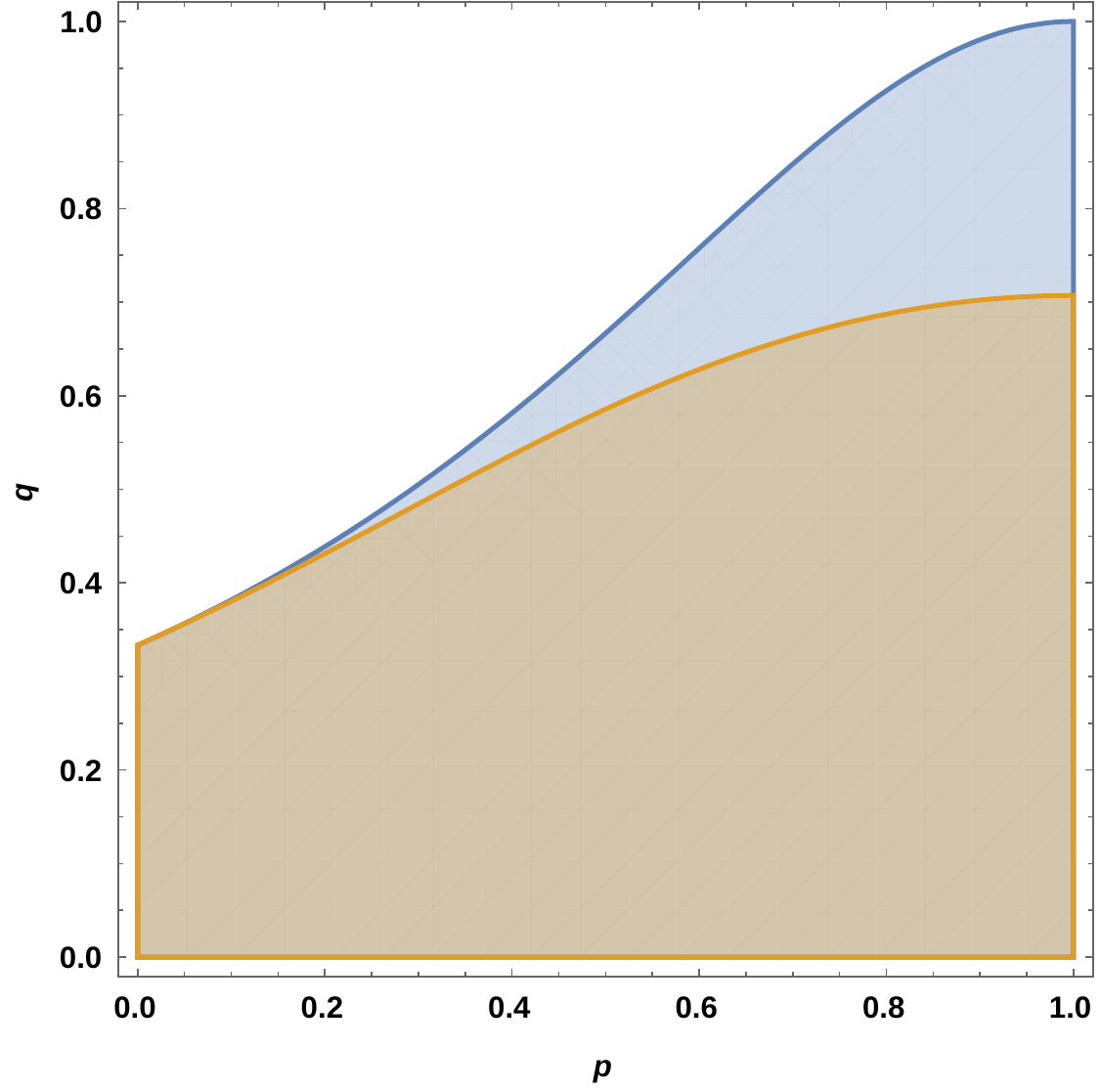}
\caption{(Color online) Map of  separable (blue)  and absolutely separable (yellow) states for local PDC with the Werner state as input. The abscissa and ordinate respectively represent the noise parameter, \(p\), and the mixing parameter,  \(q\), of the Werner state. Both axes are dimensionless.
}
\label{PDC}
\end{figure}

Phase damping channels  make the pure rank-$1$ state to a rank-$2$ one and hence the output state produced after PDCs cannot be absolutely separable, as discussed in Sec. \ref{sec3}. To obtain an absolutely separable state by using PDCs, either we consider a rank-$2$ state which can produce a rank-$3$ state with equal mixing parameter, or we can start with a rank-$3$ or a rank-$4$ state that can be absolutely separable.
 
To illustrate this feature, let us first consider a rank-$3$ state, given by
\begin{equation}\label{eq9}
\rho_3 = \frac{1}{3} |\psi\rangle \langle \psi| + \frac{1}{3} |01\rangle\langle01| + \frac{1}{3}|10\rangle\langle10|,
\end{equation} 
where $\ket{\psi}$ = $\cos{(x/2)}\ket{00}+e^{-i\phi}\sin{(x/2)}\ket{11}$. By Proposition \ref{eq1}, it is an extreme point of {\bf AS}. If one sends the state through local PDCs, it is easy to check that the state becomes rank-$4$ except $x=0$ as well as $x=\pi$ and remains absolutely separable, i.e., it satisfies the condition (\ref{eq1}) in a strict sense. Note that at  $x=0$ and $x=\pi$, the state remains unaffected by the PDCs.
 
As a second example, let us consider the initial state for the PDCs as the Werner state, given by \(q |\phi^+\rangle \langle \phi^+| + (1 - q) \frac{\mathbb{I}}{4}\) which is of rank-$4$.  It is known that the state is separable as well as absolutely separable when \(q \leq 1/3\). When the state is sent through two noisy PDC channels, entanglement gets destroyed and hence the state becomes separable even for \(q> 1/3\). We find that the state becomes separable against $q$ in a bigger range than the value of $q$ below which it is absolutely separable, as shown in Fig.~\ref{PDC}. Therefore, phase damping noise introduces a gap between separable and absolutely separable regions for Werner states. 

Numerical simulations of Haar uniform generation of two-qubit states reveal that no pure, rank-2 and rank-3 states under the action of PDC can produce absolutely separable states, thereby showing its high amount of robustness in the preservation of entanglement. Specifically,  
we find that among $10^4$ rank-4 randomly generated states,  only 60 states can create absolutely separable states when local phase damping channels act on both the qubits. It also indicates that the production of absolutely separable states via depolarizing, amplitude damping and phase damping channels from random input states is capable to distinguish these three channels.
\section{Discussion}
\label{sec5}

Absolutely separable states are those which cannot be made entangled by the action of global unitary operations. Therefore, from the resource theoretic perspective, it is important to study the set of useless states, also known as absolutely separable states. In this work, we considered $2\otimes d$ dimensional states and showed that absolutely separable states of rank-($2d-1$) are all extreme points of the set of such states. We proved that the states with full-rank satisfying strict absolute separability condition are the interior points, otherwise, they are boundary points of the set of absolutely separable states. We also showed that there exist full-rank states which are the boundary  points but not extreme points of the above set. We further proved the existence of full-rank extreme points of the set. 

We showed a possible method to make absolutely separable states entangled  by adding an auxiliary system. We also considered the reverse process, specifically the generation of absolutely separable states with the help of decoherence. In particular, we found the range of  noise parameter which can produce absolutely separable states from entangled states when sent through local noisy channels. We also showed that after sending maximally entangled states via a local depolarizing channel, threshold noise value producing separable and absolutely separable states coincide, while with the decrease of  entanglement content of the input pure state, the gap between these two critical values increases. Moreover, when Haar uniformly generated  two-qubit states are sent through noisy channels, we found that the production of absolutely separable states depends on the rank of the input states, thereby showing a discrimination method for noisy channels.

\section*{ACKNOWLEDGMENTS}
S.M. acknowledges the Ministry of Science and Technology in Taiwan (Grant No.~110-2811-M-006-501). We acknowledge the support from the Interdisciplinary Cyber Physical Systems (ICPS) program of the Department of Science and Technology (DST), India, Grant No.~DST/ICPS/QuST/Theme-1/2019/23.

\bibliography{ref}
\end{document}